\documentclass[letterpaper, 10 pt, conference]{ieeeconf}

\usepackage{amsthm}
\newtheorem{theorem}{Theorem}[section]

\newtheorem{proposition}[theorem]{Proposition}

\newtheorem{definition}[theorem]{Definition}
\newtheorem{assumption}[theorem]{Assumption}

\usepackage{amssymb}
\usepackage{titlesec}
\titlespacing*{\subsection}{0pt}{0.5ex}{0.3ex}
\usepackage{mathtools}
\usepackage{paralist}
\usepackage{color}
\usepackage{bm}
\usepackage[normalem]{ulem} 

\IEEEoverridecommandlockouts

\title{\LARGE \bf
Salted Fisher Information for Hybrid Systems
}

\author{Bukunmi G. Odunlami, Marcos Netto, and Hai Lin
\thanks{This work is supported by the National Science Foundation under Grant 2328241. B. G. Odunlami and M. Netto are with the Department of Electrical and Computer Engineering, New Jersey Institute of Technology, Newark, NJ 07102, USA. H. Lin is with the Electrical Engineering Department, University of Notre Dame, Notre Dame, IN 46556 USA.}
}

\begin{document}

\maketitle
\thispagestyle{empty}
\pagestyle{empty}
\overrideIEEEmargins

\begin{abstract}
Discrete events alter how parameter influence propagates in hybrid systems. Prevailing Fisher information formulations assume that sensitivities evolve smoothly according to continuous-time variational equations and therefore neglect the sensitivity updates induced by discrete events. This paper derives a Fisher information matrix formulation compatible with hybrid systems. To do so, we use the saltation matrix, which encodes the first-order transformation of sensitivities induced by discrete events. The resulting formulation is referred to as the salted Fisher information matrix (SFIM). The proposed framework unifies continuous information accumulation during flows with discrete updates at event times. We further establish that hybrid persistence of excitation provides a sufficient condition for positive definiteness of the SFIM. Examples are provided to demonstrate the merit of the proposed approach, including a three-bus generator–wind turbine differential–algebraic power system.
\end{abstract}

\section{Introduction}
Fisher information quantifies how parameter sensitivities determine bounds on estimator variance and therefore provides a basis for assessing parameter identifiability~\cite{VirginFisherPowerSystems}. It has been well-studied in signal processing and quantitative experiment design \cite{Sovljanski2024}, with growing application in power system modeling \cite{9585016}. In dynamical systems, measurements are generated by trajectories whose evolution depends on the underlying system dynamics. Consequently, Fisher information is evaluated along these trajectories through parameter sensitivities, typically under the assumption that the state and its sensitivities evolve smoothly in time \cite{JauffFisher}. Many engineered systems, however, are hybrid and involve discrete events triggered by conditions such as protection mechanisms or control thresholds \cite{odunlami2025hybrid}. The evolution of such systems consists of continuous flows interspersed with discrete transitions, forming a hybrid arc. For such hybrid systems, parameter influence propagates through both continuous flows and discrete events, altering the evolution of system trajectory sensitivities with respect to parameters and rendering the assessment of parameter identifiability nontrivial \cite{observabilityofHybrid}.


A previous extension of the Fisher information matrix (FIM) to \emph{switching systems}, a particular class of hybrid systems, treats mode transitions as simple reparameterizations, and aggregates intra-mode sensitivities \cite{Cavicchioli2017}. That is, sensitivities are computed within each mode based on the corresponding continuous dynamics, and then combined across modes without accounting for the effects of transitions on their evolution. However, in hybrid systems, parameter perturbations affect not only intra-mode dynamics but also event timing and reset geometry, reshaping the FIM through its quadratic dependence on sensitivities, cf. \eqref{eq.FIM}.

Fisher information has also been used to assess estimation performance in hybrid systems through discrete-time filtering formulations. In \cite{Koutsoukos2003}, mode changes are handled by switching between models and updating measurements from discrete sensors, and the Fisher information is computed from the sequence of observations. However, this formulation does not describe how parameter perturbations affect the system trajectory at the instant a discrete event occurs. Recursive formulations of the FIM for nonlinear dynamical systems, such as the posterior Cram{\'e}r–Rao bound in \cite{Tichavsky1998}, are derived under smooth state evolution, where the dependence of the trajectory on parameters is continuous in time and governed only by differential equations. These formulations do not apply when discrete events introduce jumps in the state and its sensitivities.


A natural question is how trajectories behave across discrete events. It might be assumed that nearby trajectories can be propagated across a discrete event by simply applying the Jacobian of the reset map \cite{Hartley}. However, when a trajectory crosses a guard, the post-event behavior depends not only on the reset map but also on the distinct dynamics in each mode \cite{KON2021109752}. Accordingly, both effects must be accounted for, as captured in the saltation matrix.

We draw on the saltation matrix \cite{KONG2021109752} to characterize how the state and its dependence on the parameters change when the system switches from one mode to another, where each mode corresponds to a distinct set of continuous dynamics of the form $\dot{x}=f_q(x)$. The saltation matrix provides a first-order approximation of the effect of this change, based on the dynamics in each mode, the reset map that describes how the state and parameters are updated at the switch, and the guard conditions that determine when the switch occurs.

We derive the sensitivity evolution along the hybrid arcs, including how it changes at guard crossings, and obtain an expression for the resulting FIM. The proposed salted Fisher information matrix (SFIM) captures both continuous information accumulation during flows and discrete jump effects within a unified formulation. 

In this paper, we derive the SFIM for hybrid systems. To show how parameter changes influence switching behavior, we derive the sensitivity of event times with respect to model parameters. We further show that hybrid persistence of excitation is a sufficient condition for the SFIM to be positive definite.

\noindent
\emph{Notation.} The superscripts $\cdot^-$ and $(\cdot^+)$ denote a quantity immediately before (after) an event. The operators $D_t=\frac{\partial}{\partial t}$, $D_x=\frac{\partial}{\partial x}$, and $D_\theta=\frac{\partial}{\partial \theta}$; $\land$ is the logical AND operator, $\|\cdot\|$ denotes the Euclidean norm. For a symmetric matrix $A$, $\lambda_{\min}(A)$ is its smallest eigenvalue and $A\succ 0\, (A \succeq 0)$ denotes that $A$ is positive definite (semidefinite).

 

\section{Preliminaries}
\subsection{Maximum likelihood estimation}
Let $y=\{y_i\in \mathbb{R}\}_{i=1}^{m}$ be a set of independent and identically distributed (i.i.d.) observations. Assume that these observations follow a joint probability density function $h(y;\theta)$ for a given parameter $\theta\in\mathbb{R}$. We define the likelihood function as $\mathcal{L}(\theta;y)\coloneqq h(y;\theta)$. Since the observations are independent, the joint density function equals the product of marginal density functions. That is,
\begin{equation*}
\mathcal{L}(\theta;y) \coloneqq h(y;\theta) = \prod_{i=1}^{m}h(y_{i};\theta).
\end{equation*}

Note that the likelihood function is a function of $\theta$ for fixed $y$. We seek the numerical value of $\theta$ that best explains the observations, that is, the value of $\theta$ that maximizes $\mathcal{L}(\theta;y)$. An estimator that maximizes $\mathcal{L}(\theta;y)$ is termed a maximum likelihood estimator,
\begin{equation*}
\widehat{\theta}_{\text{ML}}\coloneqq\arg\max_{\theta}\; \mathcal{L}(\theta;y) = \arg\max_{\theta}\; h(y;\theta).
\end{equation*}

For the exponential family of probability distributions (e.g., the Gaussian distribution), it is more convenient to minimize the negative of the natural logarithm of the likelihood function. Formally,
\begin{equation*}
\widehat{\theta}_{\text{ML}}=\arg\max_{\theta}\; h(y;\theta) = \arg\min_{\theta}\; \left[-\ln h(y;\theta)\right].
\end{equation*}

Since the observations are independent, it follows that
\begin{equation}\label{eq.ML_estimator}
\widehat{\theta}_{\text{ML}}= \arg\min_{\theta}\sum_{i=1}^{m} \left[-\ln h(y_{i};\theta)\right].
\end{equation}

Now, define the score function,
\begin{equation}\label{eq.score_function}
\rho(\theta;y) \coloneqq \frac{\partial\ln\mathcal{L}(\theta;y)}{\partial\theta} = \frac{\partial\ln h(y;\theta)}{\partial\theta}.
\end{equation}

From \eqref{eq.ML_estimator} and \eqref{eq.score_function}, an estimate is obtained by solving for $\rho(\theta;y)=0$. Now, let $\widehat{\theta}$ be an unbiased estimator, that is, $\mathbb{E}[\,\widehat{\theta}\,]=\theta_{\text{true}}$, and $\sigma_{\widehat{\theta}}^{2}$ its variance. The latter is a measure of the accuracy of $\widehat{\theta}$. It is natural to inquire about the minimum attainable variance any estimator can achieve. An estimator that attains this lower bound would be considered optimal. To answer this question, one needs a measure of the amount of information that an observable $y$ carries about $\theta$ upon which the probability of $y$ depends. The Fisher information provides this measure.


\subsection{Fisher information}
We define the Fisher information \cite{Kay1993} as the variance of the score function,
\begin{align*}
\mathcal{I}(\theta) &\coloneqq \mathbb{E}\left[\left( \frac{\partial}{\partial\theta}\ln\mathcal{L}(\theta;y) \right)^{\!2}\right] \\ 
&= \int_{-\infty}^{\infty}\left( \frac{\partial}{\partial\theta}\ln h(y;\theta) \right)^{\!2} \!h(y;\theta)\,dy,
\end{align*}

\noindent
and $\mathcal{I}(\theta)\ge 0$. When there are $n$ parameters, $\theta\in\mathbb{R}^n$, the Fisher information takes the form of a matrix---the FIM, which is given by
\begin{equation}\label{eq.Fisher.vector.theta}
\mathcal{I}(\theta) = \mathbb{E}\left[\left(\frac{\partial}{\partial\theta}\ln h(y;\theta)\right)\left(\frac{\partial}{\partial\theta}\ln h(y;\theta)\right)^{\!\top}\right].
\end{equation}

Now, assume we observe a signal $y(t)$ that consists of a deterministic model $h(t,\theta)$ and additive noise,
\begin{equation}\label{eq.meas.model}
y(t) = h(t;\theta)+\epsilon(t),
\end{equation}
where $\epsilon(t)\sim\mathcal{N}(0, V)$. Recall that the multivariate Gaussian distribution has a probability density function given by
\begin{align}\label{eq.multivar.Gaussian}
h(y;\theta) &= \left[\left(2\pi\right)^{n}\det\left(V\right)\right]^{-\frac{1}{2}} \nonumber \\
&\;\;\;\;\;\;\;\exp\left(-\frac{1}{2}\left(y-h(t;\theta)\right)^{\top}V^{-1}\left(y-h(t;\theta)\right) \right).
\end{align}

From \eqref{eq.multivar.Gaussian}, the log-likelihood function
\begin{equation}\label{eq.x1}
\ln h(y;\theta)\propto -\frac{1}{2}\left(y-h(t;\theta)\right)^{\top}V^{-1}\left(y-h(t;\theta)\right).
\end{equation}

Now, define $u\coloneqq y-h(t;\theta)$. Then,
\begin{align}\label{eq.x2}
\frac{\partial}{\partial\theta}\left(u^{\top}V^{-1}u\right)=2\left(\frac{\partial u}{\partial\theta}\right)V^{-1}u.
\end{align}

It follows from \eqref{eq.meas.model}, \eqref{eq.x1}, and \eqref{eq.x2} that
\begin{align}
\frac{\partial}{\partial\theta}\ln h(y;\theta) &= \frac{1}{2}\cdot2\cdot J(t)^{\top}V^{-1}\left(y-h(t;\theta)\right) \nonumber \\
&=J(t)^{\top}V^{-1}\epsilon(t) \label{eq.x3}
\end{align}

\noindent
where $J(t) = \frac{\partial h(t;\theta)}{\partial \theta}$. Substituting \eqref{eq.x3} into \eqref{eq.Fisher.vector.theta} yields
\begin{align*}
\mathcal{I}(\theta) &= \mathbb{E}\left[\left(J(t)^{\top}V^{-1}\epsilon(t)\right)\left(J(t)^{\top}V^{-1}\epsilon(t)\right)^{\!\top}\right] \nonumber \\
&= \mathbb{E}\left[J(t)^{\top}V^{-1}\epsilon(t)\epsilon(t)^{\top}V^{-1}J(t)\right] \nonumber \\
&= J(t)^{\top}V^{-1}\mathbb{E}\left[\epsilon(t)\epsilon(t)^{\top}\right]V^{-1}J(t) \nonumber \\
&= J(t)^{\top}V^{-1}VV^{-1}J(t) \nonumber \\
&= J(t)^{\top}V^{-1}J(t),
\end{align*}

\noindent
where we used the property $\left(ABC\right)^{\top}=C^{\top}B^{\top}A^{\top}$, and the fact that $J(t)$ and $V$ are deterministic. When collecting data continuously over a time interval $\left[0, T\right]$, the total information is given by
\begin{equation}\label{eq.FIM}
\mathcal{I}(\theta) = \int_{0}^{T} J(t)^\top V^{-1} J(t)\,dt.
\end{equation}


Eq. \eqref{eq.FIM} is fundamental to the salted Fisher information we propose in \S\ref{sec:SFIM}. Note that for any unbiased estimator $\widehat{\theta}$ of the parameter vector $\theta$, the estimation covariance is bounded below by the reciprocal of the FIM, $\mathrm{Cov}(\widehat{\theta}) \succeq \mathcal{I}(\theta)^{-1}$. Thus, $\mathcal{I}(\theta)^{-1}$ provides a lower bound on the achievable estimation variance. In particular, parameter identifiability requires $\mathcal{I}(\theta) \succ 0$ \cite{6638944}.

\subsection{Hybrid systems}
Hybrid systems are systems that present both continuous dynamics and discrete events that interact and jointly govern the evolution of the system's behavior over time. The continuous dynamics is referred to as the \emph{flow}, and discrete events are called \emph{jumps} or \emph{transitions}. We refer to a discrete event as the condition that triggers a mode switch (e.g., a guard crossing), while a transition (or jump) denotes the resulting state update. Following \cite{Antsaklis}, we define a hybrid system
\begin{equation*}
\mathcal H \coloneqq \langle Q, X, f, \mathrm{Init}, \mathrm{Inv}, G, R \rangle,
\end{equation*}
where $Q=\{q_1,\dots,q_N\}$ is a finite set of discrete modes and $X\subseteq\mathbb R^n$ is the continuous state space. For each $q\in Q$, $f_q:Q\times X \to\mathbb R^n$ governs the continuous dynamics, and $\mathrm{Inv}_q\subset X$ is the invariant set. The guard set $G_{q q'}\subset X$ triggers transitions from mode $q$ to mode $q'$, and $R_{q q'}$ is the reset map. A hybrid arc satisfies
\begin{equation*}
\begin{cases}
\dot x = f_q(x) & x \in \mathrm{Inv}_q, \\
x^+ = R_{q q'}(x^-) & x^- \in G_{q q'}.
\end{cases}
\end{equation*}

\subsection{Saltation matrix}
In hybrid systems, linearization does not extend directly through mode transitions. When a trajectory reaches a switching surface and undergoes a reset, nearby trajectories are no longer synchronized with the nominal one. In particular, variations in the state induce variations in the switching time through the guard condition.

A nearby trajectory may spend slightly more or less time evolving under the pre-transition vector field $f_q$, and correspondingly less or more time under the post-transition vector field \( f_{q'} \). As a result, the relative displacement between trajectories depends not only on the reset, but also on this difference in how long each vector field acts.

The saltation matrix provides an adequate first-order mapping across a transition by accounting for both effects---reflecting the change from \( f_q \) to \( f_{q'} \) along with the dependence of the switching instant on the state. The saltation matrix has been used in the analysis of periodic hybrid motions, stability properties, and trajectory optimization \cite{7039861, IVANOV1998677}.

Consider perturbations $\delta x$ of a hybrid arc. Along flows in a mode $q$, they satisfy the variational equation $\delta\dot{x}=D_x f_q(x)\,\delta x$. At a transition, however, the perturbation must also account for the shift \(\delta t\) in the event time. Linearizing the guard condition \(g_{qq'}(x)=0\) gives
\[
0 = D_x g_{qq'}^-\,\delta x^- + \bigl(D_t g_{qq'}^- + D_x g_{qq'}^- f_q^-\bigr)\delta t,
\]
so that
\begin{align}
\delta t
=
-\frac{D_x g_{qq'}^-\,\delta x^-}
{D_t g_{qq'}^- + D_x g_{qq'}^- f_q^-}.
\label{eq.7}
\end{align}
To determine $\delta x^+$, we evaluate the reset relation at the shifted even time
$
x^+(t+\delta t)=R_{qq'}\!\left(x^-(t+\delta t),\,t+\delta t\right).
$
Using
\[
x^+(t+\delta t)\approx x^+ + \delta x^+ + f_{q'}^+\,\delta t,
\]
and linearizing the reset map yields
\[
x^+ + \delta x^+ + f_{q'}^+\,\delta t
=
R_{qq'}^- + D_x R_{qq'}^- \bigl(\delta x^- + f_q^-\,\delta t\bigr)
+ D_t R_{qq'}^-\,\delta t.
\]
Since \(x^+=R_{qq'}^-\), it follows that
\begin{align}
\delta x^+
=
D_x R_{qq'}^-\,\delta x^-
+
\bigl(D_x R_{qq'}^- f_q^- + D_t R_{qq'}^- - f_{q'}^+\bigr)\delta t.
\label{eq.8}
\end{align}
Substituting \eqref{eq.7} for \(\delta t\) in \eqref{eq.8} gives
\[
\delta x^+
=
D_x R_{qq'}^-
+
\frac{
\bigl(f_{q'}^{+} - D_x R_{qq'}^- f_q^- - D_t R_{qq'}^-\bigr)
D_x g_{qq'}^-
}{
D_t g_{qq'}^- + D_x g_{qq'}^- f_q^-
}
\delta x^-.
\]
Following \cite{KONG2021109752}, the saltation matrix is derived as:
\begin{align}
\Xi_{q q'}
= D_x R_{q q'}^{-}
+
\frac{
\big(f_{q'}^{+} - D_x R_{q q'}^{-} f_q^{-} - D_t R_{q q'}^{-}\big)
D_x g_{q q'}^{-}
}{
D_t g_{q q'}^{-} + D_x g_{q q'}^{-} f_q^{-}
}
\label{eq:saltation}
\end{align}
\noindent
where
\begin{equation}\label{eq:regular_crossing}
D_t g^-_{q q'} + D_x g_{q q'} f^-_q \neq 0.
\end{equation}
Thus, the second term in the \eqref{eq:saltation} is the correction induced by the event-time shift.
\subsection{Motivational example}
Consider a DC--DC buck converter \cite{5675187} (Fig.~\ref{fig:buck}). Depending on the switching behavior and load conditions, the converter may operate in either continuous conduction mode (CCM) or discontinuous conduction mode (DCM). In CCM, the inductor current remains strictly positive during each switching cycle. Conversely, in DCM, the inductor current may decrease to zero before the next switching cycle begins, creating an additional interval in which the inductor does not conduct. We consider the DCM.

Let $q = q_1$ (switch closed) in Fig.~\ref{fig:buck}. Applying Kirchhoff's voltage law across the inductor loop yields $L \dot{i}_L = v_{in} - v_C - r_L i_L$; Kirchhoff's current law at the capacitor node yields $C \dot{v}_C = i_L - \frac{v_C}{R}$. One can similarly obtain expressions for $q = q_2$ and $q = q_3$. The state variables are $x_1=i_L$ (inductor current) and $x_2=v_C$ (capacitor voltage). The parameters are $\theta_1=L$ (inductance), $\theta_2=C$ (capacitance), and $\theta_3=r_L$ (parasitic winding resistance of the inductor). Let the input source voltage be a constant, $a = v_{in}$. We describe the DCM as follows:
\begin{equation*}
\begin{cases}
\dot x_1 = \dfrac{a - x_2 - \theta_3 x_1}{\theta_1}, \quad
\dot x_2 = \dfrac{x_1 - x_2/R}{\theta_2}
& \text{if } q = q_1,\\[6pt]
\dot x_1 = \dfrac{-x_2 - \theta_3 x_1}{\theta_1}, \quad
\dot x_2 = \dfrac{x_1 - x_2/R}{\theta_2}
& \text{if } q = q_2,\\[6pt]
\dot x_1 = 0, \quad
\dot x_2 = \dfrac{-x_2/R}{\theta_2}
& \text{if } q = q_3,
\end{cases}
\end{equation*}
where the guard sets with constant thresholds $\bar v>\underline v$ are
\begin{align*}
G_{q_1 q_2} &= \{x\in X : x_2 \ge \bar v \},\\
G_{q_2 q_1} &= \{ x \in X : x_2 \le \underline v \},\\
G_{q_2 q_3} &= \{ x \in X : x_1 = 0 \}.
\end{align*}

\begin{figure}[!t]
\centering
\includegraphics[width=0.8\linewidth]{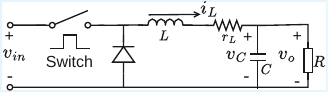}
\caption{One-line diagram of a buck converter.}
\label{fig:buck}
\end{figure}

\begin{figure}[!t]
\centering
\includegraphics[width=1\linewidth]{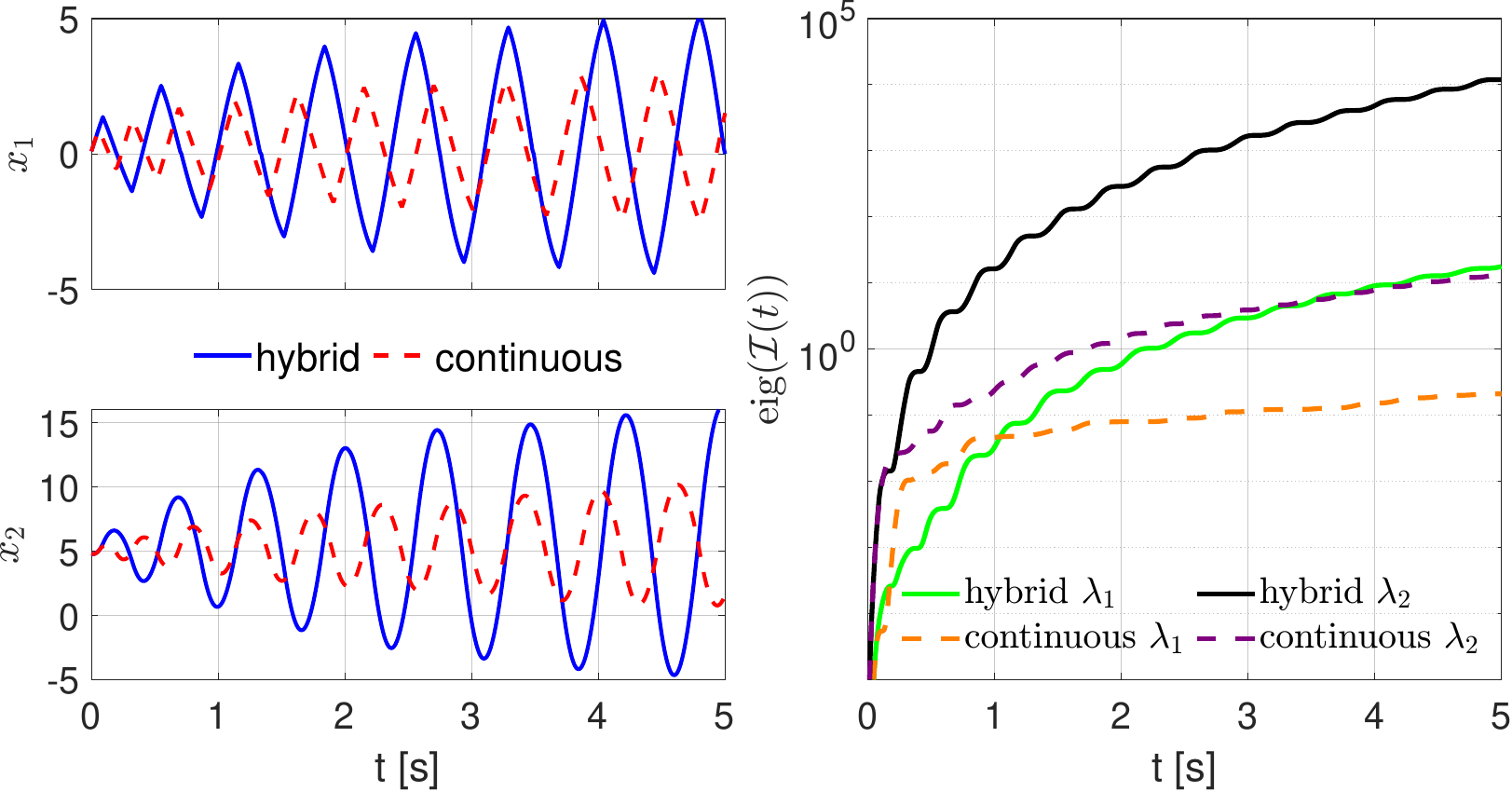}
\caption{Left: State trajectories obtained using the hybrid and continuous (averaged) models. Right: Corresponding FIM eigenvalues.}
\label{fig:buckconverter}
\end{figure}

The reset map $R_{q_2 q_3}=\{(x^-,x^+) \in X \times X: x_1^+=0 \land \ x_2^+=x_2\theta_1\}$ and identity otherwise. Given the measurement $y(t) = x_2(t) + \epsilon$ with $\epsilon \sim \mathcal{N}(0,\sigma^2)$, we compute the FIM using the proposed framework in Section~\ref{sec:SFIM} and compare it with the standard continuous averaged model. In the averaged model, the dynamics over a switching period are approximated by a weighted combination of the vector fields associated with the modes $q_1$, $q_2$, and $q_3$, 
\begin{equation*}
\dot{x} = d_1 f_{q_1}(x,\theta) + d_2 f_{q_2}(x,\theta) + d_3 f_{q_3}(x,\theta),    
\end{equation*}
where $d_1$, $d_2$, and $d_3$ denote fractions of the switching period spent in each mode, $d_1+d_2+d_3=1$. Fig. \ref{fig:buckconverter} compares the hybrid and averaged models along the same trajectory. For parameters $\theta_1$ and $\theta_2$, we examine the eigenvalues of the FIM, denoted by $\lambda_1(t)$ and $\lambda_2(t)$. These eigenvalues measure the information available for parameter estimation. The smallest eigenvalue corresponds to the least informative direction in the parameter space. For the averaged model, the eigenvalues increase gradually, and the smallest eigenvalue remains comparatively small over the horizon, which indicates limited information in at least one parameter direction. The averaged model omits the sensitivity updates associated with state resets at each jump. In contrast, propagating sensitivities through flows and updating them using the saltation matrix preserve switching effects and yield larger FIM eigenvalues.

\begin{figure}[!t]
\centering
\includegraphics[width=\linewidth]{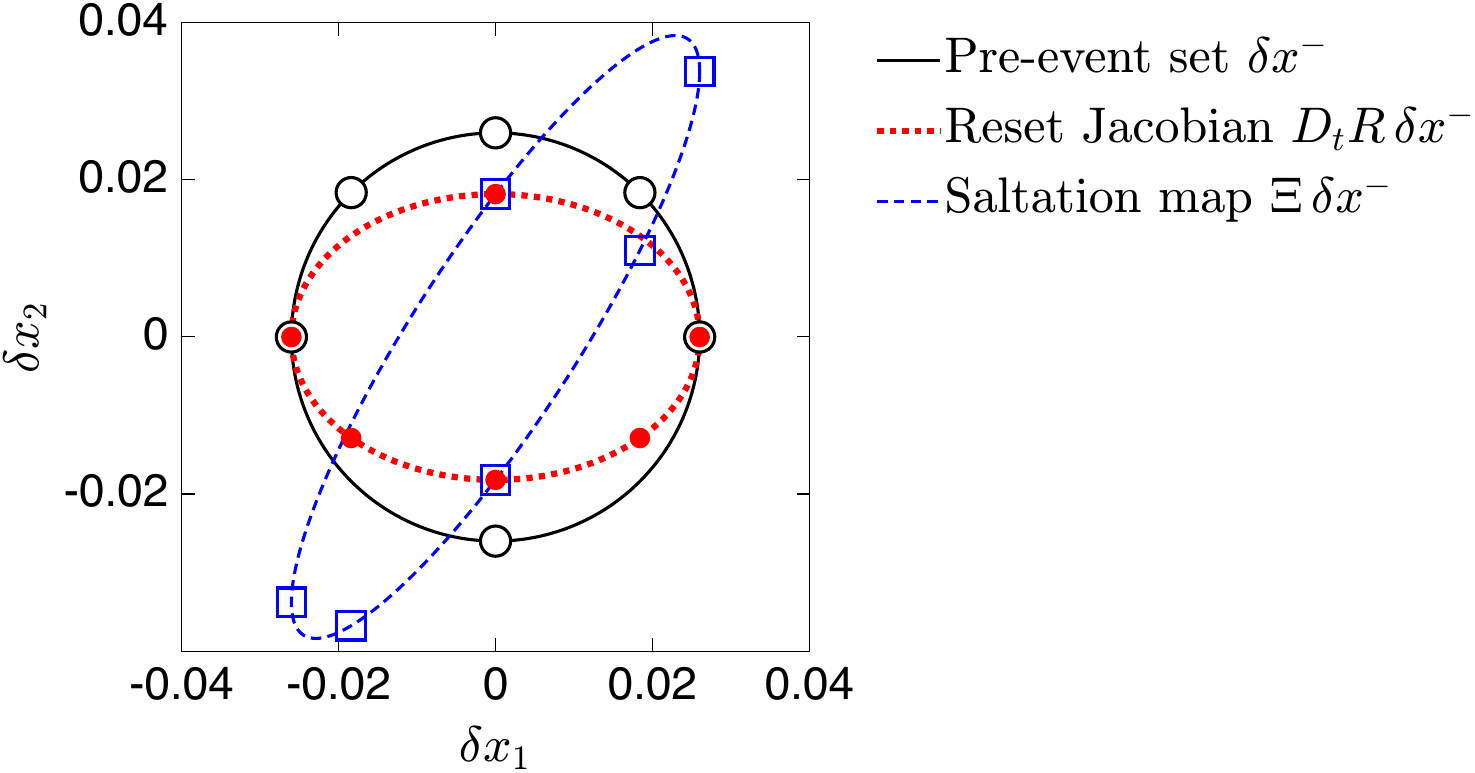}
\caption{The pre-event perturbation set, $\delta x^-$, mapped across a discrete event via the reset Jacobian, $D_tR$, and the saltation matrix, $\Xi$. Representative perturbations (unfilled black circles) and their images under $D_tR$ (red dots) and $\Xi$ (blue squares) highlight the induced geometric deformation.}
\label{fig:saltationVSrest}
\end{figure}

To further illustrate the role of the saltation matrix, consider a trajectory crossing a switching surface defined by \( g(x) = x_1 \), with \( D_x g = [1\;0]^{\top} \). Let \( \delta x^- \) denote a set of pre-event deviations around the nominal crossing point. If the transition is treated using only the reset map, the corresponding first-order update is
$
\delta x^+ = D_x R\,\delta x^-,
$
In contrast, the saltation-based update is given by
$
\delta x^+ = \Xi\,\delta x^-.
$
Let the pre- and post-event vector fields at the switching point be given by \( f^- = [1\; -1]^{\top} \), \( f^+ = [1\; 2]^{\top} \), and a reset Jacobian
\[
D_x R = \begin{bmatrix} 1 & 0 \\ 0 & -0.7 \end{bmatrix}.
\]

As shown in Fig~\ref{fig:saltationVSrest}, a circular pre-event set of radius \( 0.026 \) is mapped across the event using both updates. While \( D_x R \) produces a simple linear transformation of the perturbation set, the saltation matrix yields a distinct deformation reflecting the combined effect of the reset and the change in dynamics.

\section{Salted Fisher Information} \label{sec:SFIM}
We now describe how the states' sensitivities change along a hybrid arc with respect to parameters.

\begin{assumption}[Regularity]\label{ass.3.1}
We consider piecewise-smooth hybrid systems with differentiable vector fields, reset maps with transversal guard crossings, and finitely many transitions on compact time intervals \cite{HPE}.
\end{assumption}
\begin{proposition}[Hybrid sensitivity evolution]\label{prop:hybrid_sensitivity} The matrix containing the sensitivities of the state variables with respect to parameters, $Z(t)\coloneqq \partial x(t,\theta)/\partial\theta$, evolves as follows. For a mode $q \in Q$, while the dynamics are flowing,
\begin{equation*}
\dot Z(t) = D_x f_q(x(t),\theta)\, Z(t) + D_\theta f_q(x(t),\theta) \quad t \in (\tau_j,\tau_{j+1}).
\end{equation*}

\noindent 
At event time $\tau_j$, the state undergoes a reset
\begin{equation} \label{eq:reset_with_noise}
x(\tau_j^+) = R_{qq'}(x(\tau_j^-),\theta) + \eta_j(\theta),
\end{equation}

\noindent
where $\eta_j(\theta)$ denotes the reset noise that may depend on the parameter. Differentiating \eqref{eq:reset_with_noise} with respect to $\theta$ yields the update
\begin{equation*}
Z(\tau_j^+) = \Xi_j\, Z(\tau_j^-) + D_\theta R_{qq'}(x(\tau_j^-),\theta) + D_\theta \eta_j(\theta).
\end{equation*}
If $\eta_j$ is i.i.d. and independent of $\theta$ ($D_\theta \eta_j=0$), the update reduces to
\begin{equation}\label{eq:final_Z}
Z(\tau_j^+) = \Xi_j\, Z(\tau_j^-) + D_\theta R_{qq'}(x(\tau_j^-),\theta).
\end{equation}
\end{proposition}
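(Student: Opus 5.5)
The proof has two parts: the flow equation and the jump update. For the flow part, fix a mode $q$ and an open inter-event interval $(\tau_j,\tau_{j+1})$. Under Assumption~\ref{ass.3.1}, $f_q$ is differentiable and no event occurs on this interval, so $x(t,\theta)$ solves a smooth ODE there. Standard differentiable dependence of ODE solutions on parameters gives that $x$ is $C^1$ in $(t,\theta)$ and that mixed partials commute. Differentiating $\dot x = f_q(x,\theta)$ with respect to $\theta$ and applying the chain rule yields $\dot Z = D_x f_q\, Z + D_\theta f_q$ directly.

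The jump is the substantive step, because the event time itself depends on $\theta$. Transversality (the condition \eqref{eq:regular_crossing}) is what controls this dependence. I would write the guard condition as $g_{qq'}(x(\tau_j(\theta),\theta))=0$, possibly with explicit time dependence. The implicit function theorem then makes $\tau_j$ differentiable in $\theta$, with
\[
D_\theta \tau_j = -\frac{D_x g^-_{qq'}\, Z(\tau_j^-)}{D_t g^-_{qq'} + D_x g^-_{qq'} f_q^-}.
\]
This is exactly \eqref{eq.7} with $\delta x^-$ replaced by $Z(\tau_j^-)\delta\theta$.

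Next I would compare the perturbed and nominal post-event states at a common time $\tau_j + \delta\tau$, mirroring the derivation of \eqref{eq.8}, but now keeping the explicit $\theta$-argument of $R_{qq'}$ and the term $\eta_j(\theta)$. Linearizing
\[
x^+(\tau_j+\delta\tau) = R_{qq'}\bigl(x^-(\tau_j+\delta\tau),\theta+\delta\theta\bigr) + \eta_j(\theta+\delta\theta)
\]
produces the terms already computed in \eqref{eq.8} plus the extra contributions $D_\theta R_{qq'}\,\delta\theta + D_\theta\eta_j\,\delta\theta$. I then substitute the expression for $D_\theta\tau_j$ and collect every term multiplying $Z(\tau_j^-)$. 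By construction these terms assemble into the saltation matrix \eqref{eq:saltation}, which gives $Z(\tau_j^+) = \Xi_j Z(\tau_j^-) + D_\theta R_{qq'} + D_\theta\eta_j$.

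One subtlety needs checking. If $R$ or $g$ depends on $\theta$, the event-time shift also acquires a direct $\theta$-term through $D_\theta g$. I would either assume the guard is $\theta$-independent, which is implicit in the statement, or note that any such term would appear as an additional correction. Finally, setting $D_\theta\eta_j=0$ gives \eqref{eq:final_Z}. Finiteness of the number of transitions on compact intervals lets the flow and jump rules be concatenated along the whole hybrid arc by induction over $j$.

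The main obstacle is the bookkeeping at the event. One must justify that the one-sided sensitivities $Z(\tau_j^\pm)$ are well defined even though perturbed trajectories switch at different times. This is handled by extending each mode's flow slightly past the event and evaluating at the shifted time, as in the saltation derivation. The other delicate point is the $\theta$-dependence of the guard noted above.
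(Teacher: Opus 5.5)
Your proposal is correct and follows the same route as the paper, which differentiates the reset relation \eqref{eq:reset_with_noise} in $\theta$ and relies on the event-time-shift derivation \eqref{eq.7}--\eqref{eq:saltation} to replace $D_x R_{qq'}$ by $\Xi_j$; you make that reliance explicit through the implicit function theorem and the flow extensions that define $Z(\tau_j^\pm)$, which the paper's one-line argument leaves implicit. Your caveat about $\theta$-dependent guards is also well founded: when $D_\theta g_{qq'}^-\neq 0$ (as for the power guard \eqref{eq:powerGuard}), the update acquires the extra term $\bigl(f_{q'}^+ - D_x R_{qq'}^- f_q^- - D_t R_{qq'}^-\bigr)D_\theta g_{qq'}^- / \bigl(D_t g_{qq'}^- + D_x g_{qq'}^- f_q^-\bigr)$, so \eqref{eq:final_Z} as stated tacitly assumes a $\theta$-independent guard.
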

\noindent where $\Xi_j$ is the saltation matrix associated with the j-th transition.

The salted Fisher information is defined under Gaussian assumptions, with measurements available during flows and immediately after guard crossings. 

\begin{definition}[Salted Fisher information]\label{def:SFIM}
The Salted Fisher information is defined as
\begin{equation*}
\mathcal{I}_{\mathrm{S}}(\theta) \coloneqq \int_{0}^{T} J(t)^\top V^{-1} J(t)\, dt + \sum_{j} J(\tau_j^+)^\top V_j^{-1} J(\tau_j^+)
\end{equation*}
for a finite horizon $[0,\,T]$, where $\tau_j$ denotes guard-crossing event times, and
\begin{equation*}
J(t) = D_x h(x(t),\theta)\, Z(t) + D_\theta h(x(t),\theta)
\end{equation*}
is the output sensitivity; $Z(t)$ evolves according to the hybrid sensitivity dynamics of Proposition \ref{prop:hybrid_sensitivity}. The jump contribution is evaluated using the post-event sensitivity \eqref{eq:final_Z}.
\end{definition}

\subsection{Fisher information change at hybrid events}
We isolate the incremental change in information induced by a discrete event. At an event time $\tau_j$, the output sensitivity $J(t)$ may exhibit a discontinuity due to the jump in $Z(t)$ across the event. We quantify the resulting change in the quadratic information at the event as
\begin{equation}\label{eq:sfic}
\Delta \mathcal{I}_j = \big(J(\tau_j^+)\big)^\top V_j^{-1} J(\tau_j^+) - \big(J(\tau_j^-)\big)^\top V_j^{-1} J(\tau_j^-)
\end{equation}
which measures the instantaneous deformation of information geometry at the guard crossing. To isolate the saltation contribution, recall that the sensitivity update can be decomposed as
\begin{align*}
Z(\tau_j^+) = D_x R_{qq'} Z(\tau_j^-) + \big(\Xi_j - D_x R_{qq'}\big) Z(\tau_j^-)
\end{align*}
where the second term characterizes the induced deviation from the linearized reset map. The corresponding increment in output sensitivity is therefore
\begin{align*}
\Delta J_j \coloneqq h_x(x(\tau_j)) \big(\Xi_j - D_x R_{qq'}(x(\tau_j)) \big) Z(\tau_j^-).
\end{align*}

Using $J(\tau_j^+) = J(\tau_j^-) + \Delta J_j$ in \eqref{eq:sfic} and expanding the quadratic form yields
\begin{align}
\Delta \mathcal{I}_j = 2 \big(J(\tau_j^-)\big)^\top V_j^{-1} \Delta J_j + (\Delta J_j)^\top V_j^{-1} \Delta J_j
\end{align}

\noindent
which consists of a cross-term and a quadratic saltation contribution. The cumulative event contribution along a hybrid arc is
\begin{align}
\Delta \mathcal{I}_{\mathrm{tot}} = \sum_j \Delta \mathcal{I}_j.
\label{eq:totalI}
\end{align}

\noindent
The decomposition in \eqref{eq:totalI} describes how Fisher information accumulates along system trajectories and across discrete events, which is important for analysis and design of trajectories that maximize Fisher information for parameter estimation \cite{wilson2014fisher}.

\subsection{Parameter sensitivity of event times}
A hybrid arc may also consist of smooth solution segments separated at $\theta$-dependent switching times. Consequently, differentiating the trajectory with respect to $\theta$ requires differentiation of both the segment evolution and the breakpoints. Although saltation describes the first-order state deformation across a transition, the dependence of the switching instant itself on $\theta$ is not explicit in that representation. The following proposition gives the sensitivity of the switching time to $\theta$.

\begin{proposition}[Event-time sensitivity]\label{prop:event_time_sensitivity}
Suppose the event time $\tau(\theta)$ is defined implicitly by a guard condition
\begin{align}
g(x,\theta,\tau(\theta)) = 0.
\label{eq:guardId}
\end{align}
where $g:\mathbb{R}^n \times \mathbb{R}^p \times \mathbb{R} \to \mathbb{R}$ is continuously differentiable. Then, under the condition \eqref{eq:regular_crossing}, the derivative of the switching time with respect to $\theta$ is
\begin{align}
\frac{\partial \tau}{\partial \theta} = - \frac{D_x g(\tau^-)\, Z(\tau^-) + D_\theta g(\tau^-)} {D_t g(\tau^-) + D_x g(\tau^-)\, f_q^-(\tau^-)}.
\label{eq:parametersensitivity}
\end{align}
\end{proposition}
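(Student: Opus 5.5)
The plan is to apply the implicit function theorem to the scalar identity \eqref{eq:guardId}, viewed as a function of $\theta$ alone. Here $x$ means the pre-event state evaluated at the event time, that is, $x = x(\tau(\theta)^-,\theta)$. We therefore consider the composite map
\[
\Phi(\theta) \coloneqq g\big(x(\tau(\theta)^-,\theta),\,\theta,\,\tau(\theta)\big),
\]
which vanishes identically for all $\theta$ in a neighbourhood of the nominal value. By Assumption~\ref{ass.3.1}, the flow in mode $q$ is differentiable in $(t,\theta)$ up to and including the closure of the segment ending at $\tau$. Since $g$ is $C^1$, the composite map $\Phi$ is differentiable once $\tau(\theta)$ is known to be differentiable.

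We establish that differentiability first. Define $F(\theta,t) \coloneqq g(x(t^-,\theta),\theta,t)$, using the mode-$q$ solution extended slightly past $\tau$, which is possible because $f_q$ is smooth. Its partial derivative in $t$ at $(\theta,\tau)$ is
\[
D_t g(\tau^-) + D_x g(\tau^-)\,\dot x(\tau^-) = D_t g(\tau^-) + D_x g(\tau^-)\,f_q^-(\tau^-).
\]
This is nonzero by the regular-crossing condition \eqref{eq:regular_crossing}. The implicit function theorem then gives a unique $C^1$ function $\tau(\theta)$ with $F(\theta,\tau(\theta))=0$ near the nominal parameter, and transversality ensures this root is the actual event time.

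Next we differentiate $F(\theta,\tau(\theta))=0$ with respect to $\theta$ by the chain rule. The total derivative of the state argument splits into two parts:
\[
\frac{d}{d\theta}x(\tau(\theta)^-,\theta) = Z(\tau^-) + f_q^-(\tau^-)\,\frac{\partial\tau}{\partial\theta}.
\]
The first part is the explicit parameter dependence at fixed time, which is $Z(\tau^-)$ from Proposition~\ref{prop:hybrid_sensitivity}. The second part is the time shift, which contributes through $\dot x = f_q$. Collecting terms gives
\[
0 = D_x g(\tau^-)\big(Z(\tau^-) + f_q^-\,\partial_\theta\tau\big) + D_\theta g(\tau^-) + D_t g(\tau^-)\,\partial_\theta\tau.
\]
This is linear in the row vector $\partial_\theta\tau$. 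Dividing by the nonzero scalar $D_t g + D_x g\, f_q^-$ yields \eqref{eq:parametersensitivity}. As a consistency check, the result parallels the derivation of \eqref{eq.7}, with $\delta x^-$ replaced by $Z(\tau^-)\delta\theta$ and an added explicit term $D_\theta g\,\delta\theta$.

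The main obstacle is a matter of care rather than computation: all quantities must be evaluated with the pre-event (left) limits. $Z$ jumps at $\tau$ according to \eqref{eq:final_Z}, so using $Z(\tau^+)$ or $f_{q'}$ would give the wrong expression. I would handle this by working throughout with the mode-$q$ flow extended past $\tau$, so that every derivative is a classical derivative of a smooth function. I would also note that if several guards are active, the formula applies to the guard actually crossed first, and transversality keeps that choice locally constant in $\theta$.
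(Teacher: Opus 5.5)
Your proposal is correct and follows the same route as the paper: differentiate the guard identity by the chain rule, split the total derivative of the pre-event state as $Z(\tau^-)+f_q^-(\tau^-)\,\partial\tau/\partial\theta$, and solve for $\partial\tau/\partial\theta$ using the regular-crossing condition. Your preliminary implicit-function-theorem step on the extended mode-$q$ flow is a useful addition, since it justifies the differentiability of $\tau(\theta)$, which the paper's proof takes for granted.
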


\begin{proof}
Differentiating \eqref{eq:guardId} with respect to $\theta$ and applying the chain rule gives
\begin{align*}
D_x g(\tau^-) \frac{d}{d\theta}x(\tau(\theta),\theta)
+ D_\theta g(\tau^-)
+ D_t g(\tau^-)\frac{\partial \tau}{\partial \theta}
=0.
\end{align*}
Using
$
\frac{d}{d\theta}x(\tau(\theta),\theta)
=
Z(\tau^-)+f_q^-(\tau^-)\frac{\partial \tau}{\partial \theta},
$
and solving for $\partial \tau/\partial \theta$ yields \eqref{eq:parametersensitivity}.
\end{proof}

Proposition \ref{prop:event_time_sensitivity} shows that parameter perturbations affect both the state trajectory and the timing of the guard crossings. This timing variation alters the subsequent evolution of sensitivity and, in turn, the information accumulated for parameter identification. This motivates a structural excitation condition tailored to hybrid arcs.

\subsection{Hybrid persistence of excitation and SFIM full rank}
In parameter estimation, \emph{persistence of excitation} ensures that regressors carry sufficient independent information to uniquely identify all parameters \cite{Johnson}. Following \cite{HPE}, we define \emph{hybrid persistence of excitation (HPE)} as a condition combining both flow and event-level excitations. Formally, HPE requires that the saltation-aware regressors $\{J(t), J_j\}$ remain uniformly informative across hybrid windows.

\begin{definition}[Hybrid persistence of excitation]
Let the hybrid regressor Gramian on a window $[t_0,t_0+\mu_t]$ with jumps $j_0{:}j_0{+}\mu_j$ be
\begin{equation*}
G_{t_0,j_0} = \int_{t_0}^{t_0+\mu_t} J(t)^\top J(t)\,dt + \sum_{k=j_0}^{j_0+\mu_j} J_k^\top J_k.
\end{equation*}
The system is said to satisfy HPE if there exists $\alpha>0$ such that for all hybrid windows, $G_{t_0,j_0} \succeq \alpha I_p$ where $I_p \in \mathbb{R}^{p \times p}$ denotes the $p \times p$ identity matrix.
\end{definition}

The following result establishes that HPE is sufficient for the SFIM to be positive definite.
\begin{theorem}[HPE implies SFIM positive definite]
Suppose that the hybrid persistence of excitation condition holds on a hybrid horizon $\mathcal H$ covering all measurements used in
\[\mathcal I_{\mathrm{S}} = \int_{\text{flows in }\mathcal H} J(t)^\top V^{-1} J(t)\,dt + \sum_{\tau_j \in \mathcal H} J_j^\top V_j^{-1} J_j,\]
where $V \succ 0$ and $V_j \succ 0$. Then $\mathcal I_{\mathrm{S}} \succ 0$.
\end{theorem}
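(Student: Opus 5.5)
The plan is to compare $\mathcal I_{\mathrm{S}}$ with the unweighted hybrid regressor Gramian $G_{t_0,j_0}$ in the Loewner order, using the uniform positivity of the noise covariances. Since $V\succ 0$ and $V_j\succ 0$, their inverses are positive definite as well. Set
\[
\beta \coloneqq \min\Bigl\{\lambda_{\min}(V^{-1}),\ \min_{j}\lambda_{\min}(V_j^{-1})\Bigr\} = \min\Bigl\{1/\lambda_{\max}(V),\ \min_j 1/\lambda_{\max}(V_j)\Bigr\}.
\]
Assumption~\ref{ass.3.1} gives only finitely many transitions on the compact horizon, so the inner minimum ranges over a finite index set and $\beta>0$. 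If $V$ is time-varying, I would additionally assume it is uniformly bounded above on $\mathcal H$; this is the one place where some care is needed.

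The key pointwise step is elementary. For every $t$ and every $v\in\mathbb R^p$,
\[
v^\top J(t)^\top V^{-1}J(t)v=(J(t)v)^\top V^{-1}(J(t)v)\ge \beta\|J(t)v\|^2=\beta\, v^\top J(t)^\top J(t) v .
\]
Hence $J(t)^\top V^{-1}J(t)\succeq \beta J(t)^\top J(t)$, and in the same way $J_j^\top V_j^{-1}J_j\succeq \beta J_j^\top J_j$. Integrating over the flow intervals preserves the Loewner order, since we integrate a nonnegative quadratic form in $v$. Summing over the jumps also preserves it. This yields
\[
\mathcal I_{\mathrm S}\succeq \beta\Bigl(\int_{\text{flows in }\mathcal H}J^\top J\,dt+\sum_{\tau_j\in\mathcal H}J_j^\top J_j\Bigr).
\]

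Next I invoke HPE. Because $\mathcal H$ covers all measurements used, it contains at least one hybrid window $[t_0,t_0+\mu_t]$ with jumps $j_0{:}j_0{+}\mu_j$. All omitted terms are positive semidefinite, so the bracket dominates $G_{t_0,j_0}\succeq \alpha I_p$. Therefore $v^\top\mathcal I_{\mathrm S}v\ge \alpha\beta\|v\|^2>0$ for every $v\neq 0$, which shows $\mathcal I_{\mathrm S}\succ 0$ with $\lambda_{\min}(\mathcal I_{\mathrm S})\ge\alpha\beta$.

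The main obstacle is less the algebra than the bookkeeping of the hybrid time domain. One issue is checking that the window in the HPE definition actually fits inside $\mathcal H$, which is needed for the hypothesis to apply. The other is checking that the jump regressors $J_k$ in $G_{t_0,j_0}$ are the same post-event quantities $J(\tau_j^+)$ computed via Proposition~\ref{prop:hybrid_sensitivity}. I would settle both by reading the hypothesis "HPE holds on $\mathcal H$" as guaranteeing such a window, with the $J_k$ defined through \eqref{eq:final_Z}. Measurability and integrability of $J$ on the flows follow from the piecewise smoothness in Assumption~\ref{ass.3.1}, so the integrals are well defined.
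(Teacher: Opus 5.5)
Your proof is correct and follows the paper's argument. Both lower-bound the weights by $\min\{\lambda_{\min}(V^{-1}),\min_j\lambda_{\min}(V_j^{-1})\}>0$, reduce to the unweighted hybrid regressor Gramian, and apply HPE to get $v^\top \mathcal I_{\mathrm S} v \ge \alpha\beta\|v\|^2$. Your observation that the full-horizon Gramian dominates any HPE window contained in $\mathcal H$, because the omitted terms are positive semidefinite, is slightly more careful than the paper, which simply equates the parenthesized term with $v^\top G_{t_0,j_0} v$.
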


\begin{proof}
Let
\[\underline{\lambda} = \min\Big\{\lambda_{\min}(V^{-1}), \min_{\tau_j\in\mathcal H} \lambda_{\min}(V_j^{-1}) \Big\} > 0,\]

\noindent 
where $\lambda_{\min}(\cdot)$ is the smallest eigenvalue of its matrix argument. For any nonzero $v\in\mathbb{R}^p$,
\[v^\top \mathcal I_{\mathrm{S}} v = \int (J v)^\top V^{-1} (J v)\,dt + \sum (J_j v)^\top V_j^{-1} (J_j v).\]

\noindent
Since $V^{-1}$ and $V_j^{-1}$ are positive definite,
\[(J v)^\top V^{-1} (J v) \ge \lambda_{\min}(V^{-1}) \|J v\|^2\]
and similarly for each jump term. Hence

\begin{equation}\label{eq.21}
v^\top \mathcal I_{\mathrm{S}} v \ge \underline{\lambda} \left(\int \|J v\|^2 dt + \sum \|J_j v\|^2 \right).    
\end{equation}

The term in parentheses equals the quadratic form $v^\top G_{t_0,j_0} v$ of the hybrid regressor Gramian $G_{t_0,j_0}$. Since 
\[
\begin{aligned}
v^\top G_{t_0,j_0} v
&= v^\top \left( \int_{t_0}^{t_0+\mu_t} J^\top J\,dt
+ \sum_{j=j_0}^{j_0+\mu_j} J_j^\top J_j \right) v \\
&= \int_{t_0}^{t_0+\mu_t} (Jv)^\top (Jv)\,dt
+ \sum_{j=j_0}^{j_0+\mu_j} (J_j v)^\top (J_j v) \\
&= \int_{t_0}^{t_0+\mu_t} \|Jv\|^2\,dt
+ \sum_{j=j_0}^{j_0+\mu_j} \|J_j v\|^2,
\end{aligned}
\]

\noindent
from \eqref{eq.21}, 
$
v^\top \mathcal I_{\mathrm{S}} v \ge \underline{\lambda}\, v^\top G_{t_0,j_0} v.
$
By the HPE condition,
$v^\top G_{t_0,j_0} v \ge \alpha \|v\|^2. $
Combining the bounds gives
$v^\top \mathcal I_{\mathrm{S}} v \ge \underline{\lambda}\alpha \|v\|^2 $,
since $\underline{\lambda}\alpha>0$, the inequality holds for all $v\neq 0$, implying $\mathcal I_{\mathrm{S}} \succ 0$.
\end{proof}

\section{Numerical Experiment}
We apply the proposed SFIM framework to a three-bus power system shown in Fig.~\ref{fig:3bus_WTG}, adapted from \cite{Yu_reach}. The system consists of a synchronous generator connected to Bus 1, a wind turbine generator (WTG) at Bus 2, and a load at Bus 3, interconnected via lossless transmission lines.

\subsection{Differential-algebraic equations}
The synchronous generator is modeled using the Flux-Decay Model \cite{sauer_pai}, supplemented by basic models of the turbine-governor and excitation systems. That is,
\begin{equation}
\begin{aligned}
T^{\prime}_{\text{d}0}\dot{E}^{\prime}_{\text{q}} &= - E^{\prime}_{\text{q}} - (X_{\text{d}} - X^{\prime}_{\text{d}}) I_{\text{d}} + E_{\text{fd}}\\
\dot{\delta} &= \omega - \omega_{\text{s}}\\
M\dot{\omega} &= T_{\text{m}} - E^{\prime}_{\text{q}} I_{\text{q}} - (X_{\text{q}} - X^{\prime}_{\text{d}}) I_{\text{d}} I_{\text{q}} - D(\omega - \omega_{\text{s}})\\
T_{\text{SV}}\dot{T}_{\text{m}} &= - T_{\text{m}} + P_{\text{c}} - (1/R_{\text{D}})\left((\omega/\omega_{\text{s}}) - 1\right)\\
T_{\text{E}}\dot{E}_{\text{fd}} &= - E_{\text{fd}} + K_{\text{A}} (V_{\text{ref}} - V_{\text{t}})\\
X^{\prime}_{\text{d}}I_{\text{d}} &= E^{\prime}_{\text{q}} - V_1\cos(\theta_1-\delta)\\
X_{\text{q}}I_{\text{q}} &= -V_1\sin(\theta_1-\delta)
\label{eq:differentialStates}
\end{aligned}
\end{equation}
\begin{figure}[!t]
\centering
\includegraphics[width=1\linewidth]{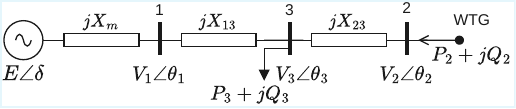}
\caption{Three-bus system with a wind turbine generator.}
\label{fig:3bus_WTG}
\vspace{-4mm}
\end{figure} 

\noindent
where $P_{\text{c}}$ denotes the power command and $V_{\text{ref}}$ the voltage reference; $T^{\prime}_{\text{d}0}$, $X_{\text{d}}$, $X^{\prime}_{\text{d}}$, $X_{\text{q}}$, $M$, $D$, $T_{\text{SV}}$, $R_{\text{D}}$, $T_{\text{E}}$, and $K_{\text{A}}$ are parameters.

The network algebraic variables are determined by the AC power flow equations. For instance, the injected active power from the WTG at Bus~2 satisfies
\begin{equation*}
P_{\text{ele}} - Y_{23}V_2V_3\sin(\theta_2-\theta_3)=0,
\end{equation*}
which couples the wind turbine power output to the network voltage magnitudes and phase angles. The quantities $V_2$, $V_3$, $\theta_2$, and $\theta_3$ denote the voltage magnitudes and phase angles at Buses 2 and 3, respectively; $Y_{23}$ is the admittance between the two buses, and $P_{\text{ele}}$ corresponds to the active power injected by the WTG at Bus 2 under unity power factor.

\noindent
The WTG is modeled by a reduced-order aerodynamic and mechanical model. Let $z$ denote the turbine rotor speed and $\gamma$ the wind speed. The rotor speed dynamics of the form $\dot{z}=f_z(z,\beta,\gamma)$ and power output are given by
\begin{equation}
\begin{aligned}
\dot{z} &= \frac{\omega_{\text{s}}}{M_r} \left(\frac{B\, C_p(\zeta,\beta)\, \gamma^3}{z} - C z^2\right) \\
v &= P_2 = \kappa\, C_p(\zeta,\beta)\, \gamma^3
\label{eq:WTG}
\end{aligned}
\end{equation}
where $C_p(\zeta,\beta)$ denotes the aerodynamic power coefficient and $\beta$ is the pitch angle with parameters $M_r$, $B$, $C$, $\zeta$ and $\kappa$.

\subsection{Hybrid dynamics}
Hybrid behavior in \eqref{eq:differentialStates}--\eqref{eq:WTG} arises from the pitch activation mechanism, which adjusts the blade pitch angle to regulate aerodynamic torque and maintain the injected electrical power within its rated limit. This results in switching between pitch-inactive and pitch-active dynamics. Accordingly, the switching behavior is characterized by two guard conditions defined as follows:

\subsubsection{Rotor speed guard} When $z$ exceeds its rated value $z^{*}$, aerodynamic torque must be reduced to prevent mechanical overload. Pitch control is therefore activated, modifying the turbine vector field. Following \cite{WANG20111288}, the pitch controller dynamics are given by
\begin{equation}\label{eq.25}
\begin{aligned}
\dot{\xi} &= z - z^{*},\quad \dot{\beta} = \frac{1}{T_\beta} \Big(k_p (z - z^{*}) + k_i \xi - \beta\Big),
\end{aligned}
\end{equation}
where $\xi$ is the integral state, $k_p$ and $k_i$ are the proportional-integral gains, and $T_\beta$ is the pitch actuator time constant. From \eqref{eq:differentialStates}--\eqref{eq.25}, the hybrid system is given by
\begin{equation}
\mathcal H =
\left\{
\begin{aligned}
&Q = \{q_1,q_2\}\\
&X = \{E^{\prime}_{\text{q}},\, \delta,\, \omega,\, T_{\text{m}},\, E_{\text{fd}},\, z,\, \xi,\, \beta\} \in \mathbb{R}^8 \\[1mm]
&f(q,X,u) =
\begin{cases}
\dot{x},\dot{z},0, 0 &\qquad q=q_1\\
\dot{x},\dot{z},\dot{\xi},\dot{\beta} &\qquad q=q_2
\end{cases} \\
&G_{q_1 q_2}=\{ X : z - z^{*} = 0,\ \dot z > 0 \}\\
&G_{q_2 q_1}=\{ X : z - z^{*} = 0,\ \dot z < 0 \}\\
&R_{q_1 q_2}(X)=R_{q_2 q_1}(X) = X.
\end{aligned}
\right.
\label{eq:hybridModel}
\end{equation}

\subsubsection{Power guard}
Now consider a power-triggered switching mechanism; pitch regulation is tied to the injected electrical power rising to its rated value. With a threshold $P_{\text{ele}}^{*}$, the guard function is defined as
\begin{equation}
g_{P_{\text{ele}}}(x,\theta,t) = P_{\text{ele}}(x,\theta,t) - P_{\text{ele}}^{*}.
\label{eq:powerGuard}
\end{equation}
Aside from the different guard condition, the hybrid dynamics in \eqref{eq:hybridModel} hold.

\subsection{Numerical simulations}
The proposed SFIM framework is implemented on the three-bus WTG power system described earlier to identify a subset of parameters $\theta = \{ M_r,\; \kappa,\; k_p,\; k_i,\; T_\beta,\; M,\; T_{\text{E}} \}$, with measurement $y = \{ E_{\text{fd}},\, V_1,\, \delta,\, P_{\text{ele}}\}.$ The system is simulated from nominal conditions with wind input $\gamma(t)=r+s\sin(100\pi t)$ inducing guard crossings, where $r$ is the mean wind speed and $s$ the oscillation amplitude. Fig. \ref{fig:fim_switch} shows the resulting hybrid mode sequence as the system switches between $q_1$ and $q_2$, and compares the evolution of $\log \det(F(t)+\varepsilon I)$ for the smooth FIM and the SFIM, with SFIM exhibiting significantly greater information accumulation.

Let $F$ denote the FIM evaluated at the final time. Table~\ref{tab:fim_metrics} shows that the smooth FIM is rank deficient, while the SFIM achieves full rank and significantly larger information metrics. Moreover, the smallest eigenvalue $\lambda_{\min}$, the information scale $\sigma$, and the log-determinant $\log\det(F+\varepsilon I)$ increase substantially, indicating improved parameter identifiability when hybrid sensitivity effects are incorporated.

Fig. \ref{fig:eigvec} shows the eigenvector associated with $\lambda_{\min}(F)$ under SFIM. The weakest information direction lies predominantly in the $(k_p,k_i)$ subspace with opposite signs, revealing a compensating PI gain trade-off. To assess the information available for these gains, consider the conditional Fisher information. Partition the parameter vector as $\theta =[\theta_a,\;\theta_b]^\top$ and $\theta_a = [k_p,\;k_i]^\top$, where $\theta_b$ contains the remaining parameters. The effective information about $\theta_a$ after accounting for $\theta_b$ is given by the Schur complement $F_{a|b} = F_{aa} - F_{ab}F_{bb}^{-1}F_{ba}$. For SFIM, the eigenvalues of $F_{a|b}$ are $\lambda(F_{a|b}) = [0.156,\;0.712]$, which are strictly positive. Thus, both $k_p$ and $k_i$ are locally identifiable, although their strong negative correlation indicates a weakly excited compensating gain direction. 

Although the reset map in \eqref{eq:hybridModel} is the identity, the sensitivity changes discontinuously through the saltation matrix at each guard crossing. The resulting increments satisfy $25 \le \Delta I_j \le 129$. 
\begin{figure}[!t]
\vspace{-2mm}
\centering
\includegraphics[width=1\linewidth]{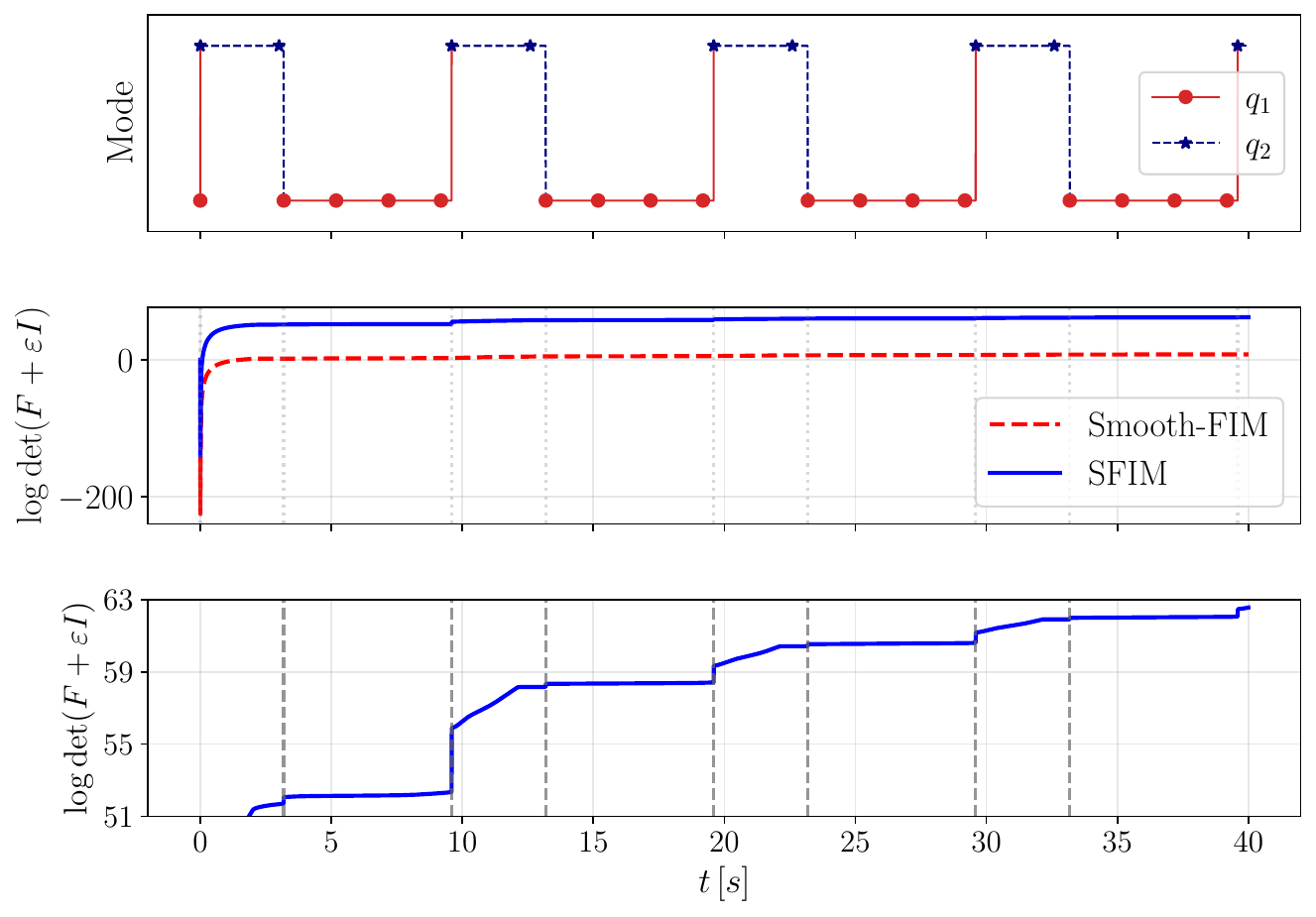}
\caption{Top: Hybrid mode sequence showing switching between $q_1$ and $q_2$ over time. Middle: Evolution of $\log \det(F+\varepsilon I)$ for the smooth FIM and the SFIM. Bottom: Zoomed-in view of the middle plot highlighting the stepwise increases in the SFIM at switching instants.}
\label{fig:fim_switch}
\vspace{-3mm}
\end{figure}

\begin{table}[!t]
\centering
\caption{Information metrics of the FIM}
\label{tab:fim_metrics}
\setlength{\tabcolsep}{4pt}
\begin{tabular}{lcccc}
\hline
Method & rank & $\lambda_{\min}$ & $\sigma$ & $\log\det(F+\varepsilon I)$ \\
\hline
Smooth FIM & 6 & $1.07\times10^{-1}$ & $0$ & $8.14$ \\
SFIM & 7 & $1.56\times10^{-1}$ & $3.95\times10^{-1}$ & $62.56$ \\
\hline
\end{tabular}
{\footnotesize
$\varepsilon = 10^{-14}$ is added for numerical regularization in the log-determinant.
}
\vspace{-3mm}
\end{table}

Next, we consider the power-triggered guard in \eqref{eq:powerGuard}. To examine the role of parameter-dependent event times, Proposition~\ref{prop:event_time_sensitivity} is applied to the two guard conditions.

Let $x=[E^{\prime}_{\text{q}},\omega,T_{\text{m}},E_{\text{fd}},z,\xi,\beta]^\top$ and define the selector $e_z=[0\ 0\ 0\ 0\ 1\ 0\ 0]^\top$. For the rotor-speed guard, $D_x g_z = e_z$, $D_t g_z = 0$, $D_\theta g_z = 0 $ so the event-time sensitivity reduces to
\begin{align*}
\frac{\partial \tau}{\partial \theta}
=
-\frac{e_z^\top Z(\tau^-)}
{e_z^\top f_q^-(\tau^-)},
\end{align*}
which indicates the switching time depends on $\theta$ only through the state sensitivity.

For \eqref{eq:powerGuard}, $D_x g_{P_{\text{ele}}} = D_x P_{\text{ele}}$, $D_t g_{P_{\text{ele}}} \neq 0$, and $D_\theta g_{P_{\text{ele}}} = D_\theta P_{\text{ele}} \neq 0$. In contrast to the guard condition in \eqref{eq:hybridModel}, the switching time depends explicitly on the parameters through $D_\theta P_{\text{ele}}$ and on time through $D_t P_{\text{ele}}$. Consequently, the event time $\tau(\theta)$ becomes parameter dependent and contributes to the saltation-based sensitivity update, which is not captured by reset Jacobian approximations. This effect is reflected in Fig.~\ref{fig:eigvec} (left), where the smooth and reset-based formulations produce nearly identical spectra and contribute little additional information at switching, while the SFIM exhibits a different eigenvalue distribution.

\begin{figure}[!t]
\centering
\includegraphics[width=\linewidth]{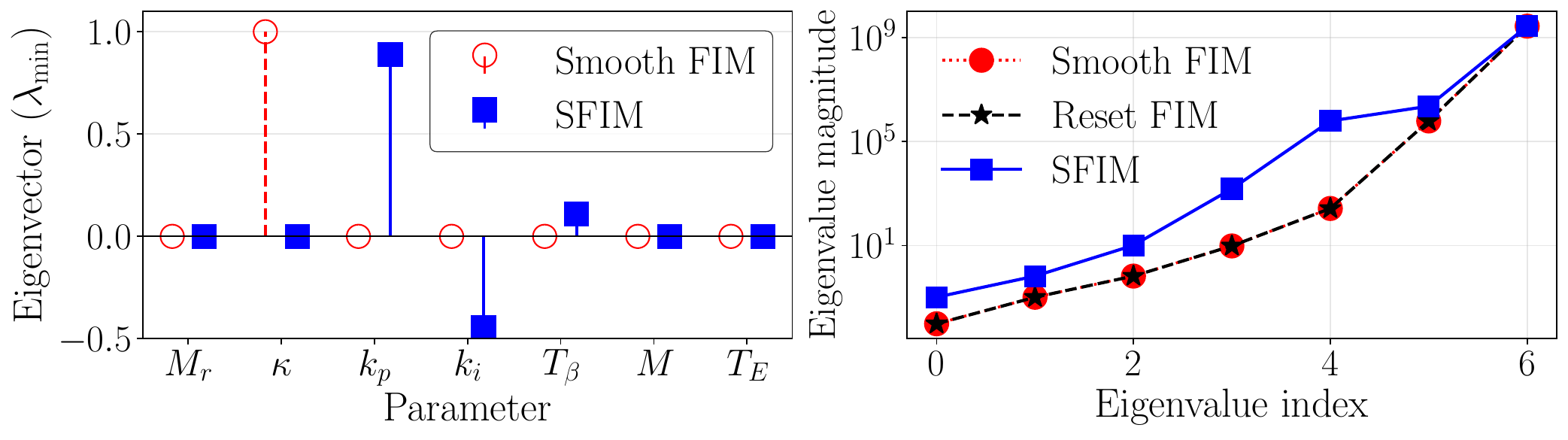}
\caption{Left: Least observable direction of the FIM under the $z$-guard for the smooth model and SFIM. Right: Eigenvalues of the FIM under the $P_{\text{ele}}$ guard for the smooth model, reset Jacobian, and SFIM.}
\label{fig:eigvec}
\vspace{-4mm}
\end{figure}

\section{Conclusions}
This paper introduced a salted Fisher information matrix (SFIM) for hybrid systems. With saltation-based sensitivity updates at guard crossings, the formulation accounts for both continuous flows and discrete transitions in the propagation of parameter information. We showed that hybrid persistence of excitation is a sufficient condition for the SFIM to be positive definite. Numerical experiments on a three-bus wind turbine generator system show that the SFIM yields a substantially richer information structure than smooth or reset-only approximations. Our formulation assumes deterministic hybrid dynamics with known guard and reset maps and relies on first-order sensitivity approximations. Extensions to stochastic hybrid systems and trajectory design for SFIM maximization remain directions for future work.

\bibliographystyle{IEEEtran} 
\bibliography{reference}

@article{odunlami2025hybrid,
  title={Hybrid Dynamical Systems Modeling of Power Systems},
  author={Odunlami, Bukunmi Gabriel and Netto, Marcos and Susuki, Yoshihiko},
  journal={preprint arXiv:2509.02822},
  year={2025}
}

@ARTICLE{KONG2021109752,
author={Kong, Nathan J. and Joe Payne, J. and Zhu, James and Johnson, Aaron M.},
journal={Proc. IEEE}, 
title={Saltation Matrices: The Essential Tool for Linearizing Hybrid Dynamical Systems}, 
year={2024},
volume={112},
number={6},
pages={585-608},}

@ARTICLE{HPE,
author={Saoud, Adnane and others},
journal={IEEE Trans. Autom. Control}, 
title={Hybrid Persistency of Excitation in Adaptive Estimation for Hybrid Systems}, 
year={2024},
volume={69},
number={12},
pages={8828-8835},}

@ARTICLE{observabilityofHybrid,
  author={Lin, Feng and others},
  journal={IEEE Trans. Autom. Control}, 
  title={On Observability of Hybrid Systems}, 
  year={2022},
  volume={67},
  number={11},
  pages={6074-6081},
  doi={10.1109/TAC.2021.3126610}}

@ARTICLE{JauffFisher,
  author={Jauffret, Claude},
  journal={IEEE Trans. Aerosp. Electron. Syst.}, 
  title={Observability and fisher information matrix in nonlinear regression}, 
  year={2007},
  volume={43},
  number={2},
  pages={756-759},
  doi={10.1109/TAES.2007.4285368}}

@ARTICLE{VirginFisherPowerSystems,
author={Virginillo, Dawn and Derviškadić, Asja and Paolone, Mario},
journal={IEEE Trans. Power Syst.}, 
title={Identification of Power System Dynamic Model Parameters Using the Fisher Information Matrix}, 
year={2026},
volume={41},
number={2},
pages={1376-1388},}

@article{Cavicchioli2017,
title = {Asymptotic Fisher information matrix of Markov switching VARMA models},
journal = {J. Multivar. Anal.},
volume = {157},
pages = {124-135},
year = {2017},
author = {Maddalena Cavicchioli},
}

@article{KON2021109752,
title = {The Salted Kalman Filter: Kalman filtering on hybrid dynamical systems},
journal = {Automatica},
volume = {131},
pages = {109752},
year = {2021},
author = {Nathan J. Kong and others},}

@article{Antsaklis,
author = {Lin, Hai and Antsaklis, Panos J.},
title = {Hybrid Dynamical Systems: An Introduction to Control and Verification},
year = {2014},
issue_date = {Mar 2014},
publisher = {Now Publishers Inc.},
address = {Hanover, MA, USA},
volume = {1},
number = {1},
issn = {2325-6818},
journal = {Found. Trends Syst. Control},
month = mar,
pages = {1–172},
numpages = {172}
}

@INPROCEEDINGS{Johnson,
  author={Johnson, Ryan S. and Di Cairano, Stefano and Sanfelice, Ricardo G.},
  booktitle={IEEE Conf. Decis. Control}, 
  title={Parameter Estimation for Hybrid Dynamical Systems using Hybrid Gradient Descent}, 
  year={2021},
  volume={},
  number={},
  pages={4648-4653},
}

@INPROCEEDINGS{5675187,
author={Vlad, Cristina and others},
booktitle={IECON 2010 - 36th Annu. Conf. IEEE Ind. Electron. Soc.}, 
title={A hybrid model for Buck converter operating in continuous and discontinuous conduction modes}, 
year={2010},
volume={},
number={},
pages={138-143},
}

@ARTICLE{Yu_reach,
author={Chen, Yu Christine and Dominguez-Garcia, Alejandro D.},
journal={IEEE Trans. Power Syst.}, 
title={A Method to Study the Effect of Renewable Resource Variability on Power System Dynamics}, 
year={2012},
volume={27},
number={4},
pages={1978-1989},
}

@article{WANG20111288,
title = {Synthesis on PI-based pitch controller of large wind turbines generator},
journal = {Energy Convers. Manag.},
volume = {52},
number = {2},
pages = {1288-1294},
year = {2011},
author = {Junsong Wang and Norman Tse and Zhiwei Gao},
}

@book{sauer_pai,
  author    = {Peter W. Sauer and M. A. Pai},
  title     = {Power System Dynamics and Stability},
  publisher = {Prentice-Hall},
  year      = {1998}
}

@article{Hartley,
author = {Hartley, Ross and Ghaffari, Maani and Eustice, Ryan and Grizzle, Jessy},
year = {2020},
pages = {027836491989438},
title = {Contact-aided invariant extended Kalman filtering for robot state estimation},
volume = {39},
journal = {Int. J. Robot. Res.},
}

@article{wilson2014fisher,
  author    = {Wilson, Andrew D. and Schultz, Jarvis A. and Murphey, Todd D.},
  title     = {Trajectory Synthesis for Fisher Information Maximization},
  journal = {IEEE Trans. Robot.},
  year      = {2014},
  volume    = {30},
  number    = {6},
  pages     = {1358--1370},
}

@INPROCEEDINGS{7039861,
  author={Saccon, Alessandro and van de Wouw, Nathan and Nijmeijer, Henk},
  booktitle={IEEE Conf. Decis. Control}, 
  title={Sensitivity analysis of hybrid systems with state jumps with application to trajectory tracking}, 
  year={2014},
  volume={},
  number={},
  pages={3065-3070},}

@article{IVANOV1998677,
title = {The stability of periodic solutions of discontinuous systems that intersect several surfaces of discontinuity},
journal={J. Appl. Math. Mech.},
volume = {62},
number = {5},
pages = {677-685},
year = {1998},
author = {A.P. Ivanov},
}

@book{Kay1993,
author = {Steven M. Kay},
title = {Fundamentals of Statistical Signal Processing: Estimation Theory},
publisher = {Prentice-Hall},
year = {1993},
volume = {1}
}

@INPROCEEDINGS{6638944,
  author={Pakrooh, Pooria and Scharf, Louis L. and Pezeshki, Ali and Chi, Yuejie},
  booktitle={IEEE Int. Conf. Acoust., Speech, Signal Process.},
  title={Analysis of fisher information and the Cramer-Rao bound for nonlinear parameter estimation after compressed sensing}, 
  year={2013},
  volume={},
  number={},
  pages={6630-6634},
 }

@ARTICLE{9585016,
  author={Fabbiani, Emanuele and Nahata, Pulkit and De Nicolao, Giuseppe and Ferrari-Trecate, Giancarlo},
  journal={IEEE Trans. Control Syst. Technol.}, 
  title={Identification of AC Distribution Networks With Recursive Least Squares and Optimal Design of Experiment}, 
  year={2022},
  volume={30},
  number={4},
  pages={1750-1757},
 }

@article{Sovljanski2024,
  author  = {V. Sovljanski and M. Paolone},
  title   = {On the use of Cramer--Rao lower bound for least-variance circuit parameters identification of Li-ion cells},
  journal = {J. Energy Storage},
  volume  = {94},
  pages   = {112223},
  year    = {2024}
}

@INPROCEEDINGS{Koutsoukos2003,
author={Koutsoukos, X.D.},
booktitle={IEEE Conf. Decis. Control}, 
title={Estimation of hybrid systems using discrete sensors}, 
year={2003},
volume={1},
number={},
pages={155-160},}

@ARTICLE{Tichavsky1998,
author={Tichavsky, P. and Muravchik, C.H. and Nehorai, A.},
journal={IEEE Trans. Signal Process.}, 
title={Posterior Cramer-Rao bounds for discrete-time nonlinear filtering}, 
year={1998},
volume={46},
number={5},
pages={1386-1396},}
\end{document}